\def\etal.{et\penalty50\ al.}
\theoremstyle{plain}
\newtheorem{theorem}{Theorem}[section]
\newtheorem{lemma}[theorem]{Lemma}
\newtheorem{fact}[theorem]{Fact}
\newtheorem{openprob}{Open Problem}
\theoremstyle{definition}
\theoremstyle{remark}
\theoremstyle{plain}
\newtheorem*{theorem*}{Theorem}
\newcommand*\samethanks[1][\value{footnote}]{\footnotemark[#1]}
\algnewcommand{\LineComment}[1]{\State \(\triangleright\) #1}
\def\etal.{et\penalty50\ al.}
\def\IR {\mathbb{R}}
\title{Maximum Weight Convex Polytope}
\author{Mohammad Ali Abam\\
Sharif University of Technology, CE \\
\href{mailto:abam@sharif.edu}{abam@sharif.edu}\and
Ali Mohammad Lavasani\\
Concordia University, CSSE\\
\href{mailto:ali.mohammadlavasani@concordia.ca}{ali.mohammadlavasani@concordia.ca\thanks{Research is supported by NSERC.}}\and
Denis Pankratov\\
Concordia University, CSSE\\
\href{mailto:denis.pankratov@concordia.ca}{denis.pankratov@concordia.ca}\samethanks}
\date{\today}
\begin{document}

\maketitle

\begin{abstract}
We study the maximum weight convex polytope problem, in which the goal is to find a convex polytope maximizing the total weight of enclosed points. Prior to this work, the only known result for this problem was an $O(n^3)$ algorithm for the case of $2$ dimensions due to Bautista et al. We show that the problem becomes $\mathcal{NP}$-hard to solve exactly in $3$ dimensions, and $\mathcal{NP}$-hard to approximate within $n^{1/2-\epsilon}$ for any $\epsilon > 0$ in $4$ or more dimensions. 
We also give a new algorithm for $2$ dimensions, albeit with the same $O(n^3)$ running time complexity as that of the algorithm of Bautsita et al. 
\end{abstract}

\section{Introduction}

Suppose you are given a set of $n$ points $S$ in $\mathbb{R}^d$ with weights $w : S \rightarrow \mathbb{R}$; note that weights can be positive or negative. The weight of a polytope $P$ is defined as $w(P) = \sum_{x \in S \cap P} w(x)$. In the \emph{maximum weight convex polytope} problem, or $MWCP$ for short,  the goal is to find a convex polytope of maximum weight. This is a rather natural and fundamental computational geometry question. 

$MWCP$ with a binary weight function, such as $w : S \rightarrow \{+1, -1\}$, belongs to a large class of computational geometry problems on bichromatic point sets with weights $\{+1, -1\}$ corresponding to two colors, typically ``red'' and ``blue''. For example, in the maximum box problem one is given a set of $r$ red points and a set of $b$ blue points in the plane and the goal is to find an axis-aligned rectangle which maximizes the number of blue points and does not contain any red points. Liu and Nediak \cite{liu2003planar} gave an exact $O(r\log r+r+b^2\log b)$ algorithm, and Eckstein et al. \cite{eckstein2002maximum} construct an efficient branch-and-bound algorithm motivated by a problem in data analysis. Liu and Nediak \cite{liu2003planar} also show how to solve efficiently a related bichromatic separability with two boxes problem, introduced by Cort{\'e}s et al. \cite{cortes2009bichromatic}.

$MWCP$ is also related to bichromatic discrepancy problems, where one is given two finite sets of points $S^+$ and $S^-$ in $\IR^d$, and the goal is to find an axis aligned parallelepiped (also called a box) $B$ maximizing the difference between the number of the points of $S^+$ and $S^-$ inside the box, i.e. $||B\cap S^+|-|B\cap S^-||$. Let $n=|S^+\cup S^-|$ denote the total number of points. Dobkin et al. \cite{dobkin1996computing} solved this problem in $\IR^2$ in $O(n^2\log n)$ time. Liu and Nediak \cite{liu2003planar} presented a $2$-factor approximation for this problem in $\IR^2$ with $O(n \log^2 n)$ running time. 

In another related problem, namely, numerical discrepancy problem, one is given a set of $n$ points $S \subset [0,1]^2$. The goal is to find a box $B$ that maximizes the numerical discrepancy of $B$ defined as $||B \cap S|/|S| - \mu(B) |$, where $\mu(B)$ denotes the area of $B$. Observe that the numerical discrepancy of $B$ can be thought of as measuring the deviation of the empirical distribution from the uniform distribution. Dobkin et al.  \cite{dobkin1996computing} solved this problem in $\IR^2$ in $O(n^2 \log^2 n)$ time. Liu and Nediak \cite{liu2003planar} presented a 2-factor approximation for this problem in $\IR^2$ with $O(n \log^3 n)$ running time.

The above problems introduce constraints on the shape of the solution, namely that the convex polygon must be an axis-aligned parallelepiped. In another variation studied by Gonz{\'a}lez-Aguilar et al. \cite{gonzalez2019maximum} the geometric shape of the solution is restricted to be a rectilinear convex hull of points (note that the rectilinear convex hull is not necessarily a convex subset of $\mathbb{R}^2$). Gonz{\'a}lez-Aguilar et al. \cite{gonzalez2019maximum} gave an $O(n^3)$ algorithm for this problem.


We note that the above problems are very similar to our problem at first glance. A deeper investigation shows that the nature of restriction on the solution set is crucial for the above problems and algorithms for them, and so new ideas and techniques are needed for $MWCP$ problem. There is one other problem that is directly relevant to $MWCP$, and that is the optimal islands problem studied by Bautista et al. \cite{bautista2011computing}. In this problem, one is given a set $S$ of $n$ points colored with $2$ colors in the plane. A subset $\mathcal{I} \in S$ is called an island of $S$, if $\mathcal{I}$ is an intersection of $S$ and a convex set $C$. Bautista et al. \cite{bautista2011computing} gave an $O(n^3)$-time algorithm to find a monochromatic island of maximum cardinality. Their algorithm can also be used to solve the $MWCP$ problem in $2$ dimensions. 

The class of problems to which $MWCP$ belongs have important practical applications in data analysis and machine learning. In particular, Bautista et al. \cite{bautista2011computing} were motivated by clustering applications. Given a training dataset of points $S \subset \mathbb{R}^d$ that are labelled with two colors ``red'' and ``blue'', in a classification problem one is interested in a simple description of a region of space corresponding to the class of ``red'' points, for example. One possibility is to use convex hulls for such a description (see, for example, Kudo et al. \cite{kudo1998approximation}). 
If dataset is $2$-dimensional one arrives naturally at the optimal islands problem. However, datasets are often noisy, so one should not expect to see large monochromatic islands, so perhaps weighted version of the problem, such as $MWCP$, might be more suitable. A bigger issue is that in classification problems datasets are often high dimensional and one cannot always hope to obtain clusters by projecting to $2$ dimensions first. 
Thus, for clustering applications it is important to be able to solve $MWCP$ efficiently in high dimensions. This is the question we tackle in this paper. Alas, we show that $MWCP$ is $\mathcal{NP}$-hard in $3$ dimensions (Theorem~\ref{theorem: 3d hardness}), and that it is $\mathcal{NP}$-hard to approximate within $n^{1/2-\epsilon}$ for any $\epsilon > 0$ in $4$ dimensions even with binary weights (Theorem~\ref{theorem: 4d hardness}). We also give a completely new algorithm for $2$ dimensions with running time $O(n^3)$ matching Bautista et al.

\section{Preliminaries}

Whenever we write ``polytope'' in this paper we mean a convex polytope. $S$ denotes the input set of $n$ points in $\IR^d$ for $d\geq 1$ and a weight function is denoted by $w:S\to \IR$. The weight of a polytope $P$, denoted by $w(P)$, is defined as follows:
$$w(P)= \sum_{v\in S \cap P} w(x)$$
In $MWCP$ problem, the goal is to find a polytope with maximum weight. Note that points $v \in S$ with $0$-weight do not affect weight of any polytope, and so they can be removed from the input in a preprocessing step. Henceforth, we assume that for all $v \in S$ we have $w(v) \neq 0$. We use $S^-$ and $S^+$ for the subsets of points of $S$ with negative and positive weights respectively. For a set of points $C \subset \mathbb{R}^d$ we let $conv(C)$ denote the convex hull of $C$. With a slight abuse of notation, we define $w(C)=w(conv(C))$. A subset $C\subseteq S^+$ is \textit{maximal} if for every $v\in C$, $w(C)>w(C\setminus \{v\})$. 

Recall that a polytope has two standard equivalent descriptions: $\mathcal{V}$-polytope is described as a convex hull of vertices, and $\mathcal{H}$-polytope is described as an intersection of half-spaces. We shall primarily work with $\mathcal{V}$-polytopes due to the nature of $MWCP$ problem. We let $vert(P)$ denote the set of vertices of a polytope $P$.  Vertices of a polytope are also its $0$-faces and edges of a polytope are its $1$-faces. We state a few facts about polytopes here that will be used later in the paper; for a more thorough introduction to polytope theory, the reader is referred to the excellent lecture notes of Ziegler~\cite{ziegler2012lectures}.

\begin{fact} [$\mathcal{V}$-polytope definition]
\label{fact: convex combination}
Let $P\subseteq \IR^d$ be a polytope and $v\in \IR^d$ be a point. $v\in P$ if and only if there is a convex combination of $vert(P)$ equal to $v$.
\end{fact}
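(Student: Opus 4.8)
The plan is to prove the two implications separately, with the reverse direction — that membership in $P$ forces a convex representation in terms of the vertices — being the substantive one. For the forward direction, observe that each $v\in vert(P)$ lies in $P$ by definition, and $P$ is convex, so any convex combination of points of $vert(P)$ lies in $P$; this follows from the definition of convexity together with a trivial induction on the number of terms in the combination.

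For the reverse direction I would argue by induction on $k=\dim(P)$, the dimension of the affine hull of $P$ (the case $P=\emptyset$ being vacuous). In the base case $k=0$, $P$ is a single point, which is also its unique vertex, and the claim is immediate. For the inductive step, fix $v\in P$. If $v$ lies on the relative boundary of $P$, then $v$ belongs to some proper face $F$ of $P$; since $F$ is itself a polytope with $\dim(F)<k$, and the vertices of a face of $P$ are vertices of $P$ (so $vert(F)\subseteq vert(P)$), the induction hypothesis applied to $F$ expresses $v$ as a convex combination of $vert(F)\subseteq vert(P)$. If instead $v$ lies in the relative interior of $P$, pick any $u\in vert(P)$ — a nonempty polytope has at least one vertex — and, assuming $v\neq u$ (otherwise $v$ is already a vertex), follow the ray emanating from $u$ through $v$; since $P$ is bounded, this ray leaves $P$ at a point $v'$ on the relative boundary, and $v=\lambda u+(1-\lambda)v'$ with $\lambda\in(0,1)$. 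Applying the boundary case to $v'$ and combining with $u\in vert(P)$ then exhibits $v$ as a convex combination of $vert(P)$.

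The ingredients I would invoke as standard polytope facts (see Ziegler~\cite{ziegler2012lectures}) are: a polytope is bounded; a nonempty polytope has at least one vertex; every relative-boundary point of a polytope lies in a proper face; a proper face of a polytope is a polytope of strictly smaller dimension; and vertices of a face of $P$ are vertices of $P$. Granting these, the only real work is the boundary/interior dichotomy above, and the main obstacle is organizing it cleanly — in particular guaranteeing in the interior case that there is a vertex to aim the ray at, which is precisely where boundedness and nonemptiness of $P$ are used. An alternative and equally short route avoids dimension induction: write $P=conv(X)$ for a finite set $X$ (the defining property of a $\mathcal{V}$-polytope), observe that any $x\in X$ that is not a vertex of $P$ is redundant, i.e. $conv(X\setminus\{x\})=conv(X)$, and iterate the deletion until exactly $vert(P)$ remains, whence $conv(vert(P))=P$. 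I would present the induction version as the primary argument and mention the deletion argument as a remark.
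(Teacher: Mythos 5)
Your argument is correct, but note that the paper does not prove this statement at all: it is listed among background facts and delegated to Ziegler's lecture notes~\cite{ziegler2012lectures}, so there is no in-paper proof to match. What you give is the standard argument for the nontrivial direction (that $P=conv(vert(P))$): forward direction by convexity, reverse direction by induction on $\dim(P)$ via the boundary/interior dichotomy, with the ray from a vertex through an interior point exiting $P$ at a relative-boundary point by boundedness; the facts you import (every relative-boundary point lies in a proper face, a face is a polytope of smaller dimension, $vert(F)\subseteq vert(P)$ --- the last being exactly Fact~\ref{fact: face is polytope} of the paper) are indeed standard, and the induction goes through. One honest caveat: several of those imported ingredients (in particular ``a nonempty polytope has a vertex'' and ``the relative boundary is a union of proper faces'') are themselves theorems of roughly the same depth as the fact being proved, so as a self-contained justification your secondary route is actually the cleaner one --- starting from the $\mathcal{V}$-description $P=conv(X)$, showing any non-vertex $x\in X$ satisfies $conv(X\setminus\{x\})=conv(X)$ (if $x=\tfrac12(a+b)$ with $a,b\in P$, $a\neq b$, rewrite the combinations for $a$ and $b$ to solve for $x$ in terms of $X\setminus\{x\}$), and iterating until only $vert(P)$ remains. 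Either way, the claim is established, and this is consistent with the paper's treatment of it as a known fact.
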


\begin{fact}
\label{fact: face is polytope}
Let $P\subseteq \IR^d$ be a polytope and $F$ be a face of $P$. The face $F$ is a polytope, with $vert(F) = F \cap vert(P).$
\end{fact}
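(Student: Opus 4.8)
The plan is to exhibit $F$ directly as the convex hull of a finite subset of $vert(P)$, and then identify that subset with the vertex set of $F$. Write $F = P \cap H$ where $H = \{x \in \mathbb{R}^d : \langle a, x\rangle = b\}$ is a supporting hyperplane of $P$, so that $\langle a, x\rangle \le b$ holds for all $x \in P$; the improper faces $F = P$ and $F = \emptyset$ are trivial and can be set aside at the start. Let $V = vert(P)$ and put $V_F := V \cap F = \{v \in V : \langle a, v\rangle = b\}$, which is finite since $V$ is.

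First I would prove $F = conv(V_F)$. The inclusion $conv(V_F) \subseteq F$ is immediate because $V_F \subseteq F$ and $F$, being an intersection of two convex sets, is convex. For the reverse inclusion, take $x \in F$; by Fact~\ref{fact: convex combination} we may write $x = \sum_{v \in V} \lambda_v v$ as a convex combination of $vert(P)$. Then $b = \langle a, x\rangle = \sum_{v} \lambda_v \langle a, v\rangle \le \sum_v \lambda_v b = b$, and since every summand obeys $\langle a, v\rangle \le b$, the equality forces $\langle a, v\rangle = b$ for each $v$ with $\lambda_v > 0$, i.e. every such $v$ lies in $V_F$. Hence $x \in conv(V_F)$. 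Since $V_F$ is finite, $F = conv(V_F)$ is a $\mathcal{V}$-polytope, so $F$ is a polytope.

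It remains to establish $vert(F) = V_F$. For the inclusion $vert(F) \subseteq V_F$: a vertex of a $\mathcal{V}$-polytope is an extreme point and therefore must lie in any finite generating set, so every vertex of $F = conv(V_F)$ belongs to $V_F$. For $V_F \subseteq vert(F)$: let $v \in V_F$. Since $v \in vert(P)$, there is a supporting hyperplane $H'$ of $P$ with $P \cap H' = \{v\}$. As $F \subseteq P$, the same $H'$ supports $F$, and $v \in F \cap H' \subseteq P \cap H' = \{v\}$, so $F \cap H' = \{v\}$, which witnesses $v$ as a vertex of $F$. Combining the two inclusions gives $vert(F) = V_F = F \cap vert(P)$, as claimed.

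I do not anticipate a serious obstacle; the one step needing care is the inclusion $V_F \subseteq vert(F)$ — promoting a vertex of $P$ to a vertex of the face $F$ — together with checking that the statement degrades gracefully on the improper faces, where $vert(\emptyset) = \emptyset = \emptyset \cap vert(P)$ and $vert(P) = P \cap vert(P)$. Alternatively, one can run that step purely through extreme points: if some $v \in V_F$ were a proper convex combination of points of $F \setminus \{v\} \subseteq P \setminus \{v\}$, this would contradict $v$ being an extreme point of $P$, so $v$ is extreme in $F$ and hence a vertex of the polytope $F$.
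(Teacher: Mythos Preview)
Your proof is correct and follows the standard textbook argument. Note that the paper does not actually supply a proof of this statement: it is recorded as a background fact with a reference to Ziegler~\cite{ziegler2012lectures}, so there is no ``paper's own proof'' to compare against. Your write-up is precisely the kind of argument one finds in Ziegler, and it correctly leverages the paper's Fact~\ref{fact: convex combination} for the key inclusion $F \subseteq conv(V_F)$.
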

Let $P\subseteq \IR^d$ be a polytope and $F$ be a face of $P$. For a hyperplane $h$ such that $F\subseteq h$ we define $h^-$ and $h^+$ to be the open half spaces bounded by $h$ such that $h^-\cap P = \emptyset$.




A polytope $P\in \IR^d$ is a \textit{polytope embedding} of a graph $G(V,E)$ if there exist a one-to-one function $f:V\to vert(P)$ such that if $(u,v) \in E$ then $(f(v),f(u))$ is an edge of $P$. Note that $P$ may have some extra edges compared to $G$. If $P$ has exactly $|E|$ edges, then we call this embedding a \textit{polytope realization} of $G$.

\section{Results}

In this section we present our results for the $MWCP$ problem beginning with an overview of upper bounds in Section~\ref{sec:ubs} (where we present a new algorithm for $2$ dimensions), followed by lower bounds for $3$ and $4$ dimensions in Section~\ref{sec:lbs}.

\subsection{Upper bounds for \texorpdfstring{$1$}{Lg} and \texorpdfstring{$2$}{Lg} dimensions}
\label{sec:ubs}

We begin with a simple observation: we can assume without loss of generality that vertices of a maximum weight polytope are elements of $S^+$.

\begin{lemma}
\label{lemma: positive enough}
For every set $S$ of points  in $\IR^d$, there exists a maximum weight polytope $P$ with $vert(P) \subseteq S^+$.
\end{lemma}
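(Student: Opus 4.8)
The plan is to start with an arbitrary maximum weight polytope $P$ and argue that we can modify it so that all of its vertices lie in $S^+$, without decreasing its weight. The natural candidate for the modified polytope is $P' = conv(vert(P) \cap S^+)$, i.e. we simply throw away every vertex of $P$ that is not a positively-weighted input point. Since $P' \subseteq P$, every point of $S^-$ inside $P'$ is also inside $P$, so $P'$ does not pick up any new negative contributions; the only thing to worry about is that $P'$ might lose some positively-weighted points of $S$ that were in $P$ but not in $P'$. So the crux is to show that $S^+ \cap P \subseteq P'$, or at least that no positive weight is lost on balance.

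First I would handle the degenerate possibilities: if $vert(P)$ contains no point of $S^+$ at all, then $P' $ is empty and $w(P') = 0$; but then every input point inside $P$ — if any — must lie in the interior or on lower-dimensional faces, and in particular $P$ still contains no vertex from $S^+$. I would argue $w(P) \le 0$ in this case as well (any enclosed positive point could be "pushed out" — see below), so the empty polytope is already optimal and the claim holds vacuously. In the main case, I want to show that every $v \in S^+ \cap P$ already lies in $P'$. Suppose not: there is $v \in S^+ \cap P$ with $v \notin P' = conv(vert(P)\cap S^+)$. By Fact~\ref{fact: convex combination}, $v$ is a convex combination of $vert(P)$, and since $v \notin P'$ this convex combination must assign positive coefficient to at least one vertex $u \in vert(P) \setminus S^+$.

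The key step — and the one I expect to be the main obstacle — is turning the previous observation into an actual improvement of $P$, which would contradict optimality and thereby prove that no such $v$ exists. The idea is that a "bad" vertex $u \in vert(P)\setminus S^+$ is doing no good: $u$ itself contributes nothing positive (it is either not in $S$, or in $S^-$), so we should be able to remove it. Concretely, consider $Q = conv(vert(P)\setminus\{u\})$. We have $Q \subseteq P$, so $Q$ gains no negative points, and $w(Q) \le w(P)$ only if $Q$ loses some point of $S^+ \cap P$. Iterating the removal of bad vertices, we eventually reach exactly $P'$; so if at every stage $w$ does not decrease we are done, and if at some stage it strictly decreases because a positive point $v$ falls out of the hull, we need a separate argument. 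Here I would invoke a separation argument: if $v \in S^+$ lies in $P$ but not in $Q = conv(vert(P)\setminus\{u\})$, pick a hyperplane $h$ separating $v$ from $Q$ with $v$ strictly on the side $h^+$, and translate/rotate it so that it cuts off from $P$ a small cap containing $v$ and no point of $S^-$ — possible since $S$ is finite, so we can choose $h$ to avoid all the finitely many negative points on the cut-off side; then the polytope $P \cap \overline{h^+}$ still has weight at least $w(P)$ and moreover has strictly fewer "bad" vertices among its vertex set, or we repeat. Formalizing the bookkeeping of this iterative cutting argument — ensuring it terminates and that at termination all vertices are in $S^+$ — is the delicate part; an alternative, cleaner route I would try first is a direct global argument: take $P^* = conv(S^+ \cap P)$, note $P^* \subseteq P$ so it inherits no new negative points, observe $S^+ \cap P^* = S^+ \cap P$ (a positive point of $P$ is trivially in the convex hull of the positive points of $P$, since it is one of them), hence $w(P^*) = \sum_{x \in S \cap P^*} w(x) \ge \sum_{x \in S^+ \cap P} w(x) \ge w(P)$, and $vert(P^*) \subseteq S^+ \cap P \subseteq S^+$ by Fact~\ref{fact: face is polytope} (vertices of a hull of a point set form a subset of that point set). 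This last argument looks complete and avoids the iterative obstacle entirely; the only subtlety to double-check is the trivial-looking inclusion $S^+ \cap P \subseteq P^*$, which holds simply because each such point is one of the generators of $P^*$.
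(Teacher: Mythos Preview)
Your proposal is correct, and the ``alternative, cleaner route'' you describe at the end---taking $P^* = conv(S^+ \cap P)$, noting $P^* \subseteq P$ so no new negative points enter while every point of $S^+ \cap P$ is a generator of $P^*$ and hence remains inside---is exactly the paper's proof. The earlier attempt via $P' = conv(vert(P)\cap S^+)$ and the iterative vertex-removal argument is an unnecessary detour; you can drop it entirely and lead with the $P^*$ argument.
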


\begin{proof}
Let $P$ be a maximum weight polytope and define $C= S^+\cap P$. The convex hull $conv(C)$ is a subset of $P$ that has all the positive points of $P$. Thus, $w(C)\geq w(P)$. Since $vert(conv(C)) \subseteq S^+$, we have that $conv(C)$ satisfies the conditions of the lemma.
\end{proof} 

The above lemma implies that to solve $MWCP$ it is sufficient to find a set $C\subseteq S^+$ with maximum weight of its convex hall. In particular, when $d = 1$ the $MWCP$ problem reduces to the maximum subarray problem (consider the array of weights of points in $S$ in increasing order of their $x$-coordinates). The following result is immediate from well known algorithms for the maximum subarray problems.

\begin{theorem}
The $MWCP$ problem in $1$ dimension ($d = 1$) is solvable in $O(n \log n)$ time. Moreover, if input points are sorted the problem is solvable in $O(n)$ time.
\end{theorem}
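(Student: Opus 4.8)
The plan is to reduce the one-dimensional $MWCP$ problem to the classical maximum subarray problem and then invoke known algorithms. By Lemma~\ref{lemma: positive enough}, an optimal polytope in $\IR^1$ can be taken to have its vertices in $S^+$; in one dimension a convex polytope is just a closed interval $[a,b]$, and its weight is the sum of weights of all points of $S$ lying inside it. So the first step is to sort the points of $S$ by their coordinate, obtaining $p_1 \le p_2 \le \cdots \le p_n$ with associated weight array $A[i] = w(p_i)$. The key observation is that any candidate optimal interval may be taken to have endpoints at points of $S$ (shrinking an interval until it touches a point of $S$ on each side never decreases its weight, and the empty polytope corresponds to the empty subarray, or we can just allow a trivial lower bound of $0$ if desired, though the problem as stated asks for a maximum over nonempty polytopes — a single point is always a valid polytope). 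Hence the weight of the best interval equals $\max_{1 \le i \le j \le n} \sum_{k=i}^{j} A[k]$, which is exactly the maximum subarray sum of $A$.

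The second step is to apply Kadane's algorithm, which computes the maximum subarray sum of an array of length $n$ in $O(n)$ time and $O(1)$ additional space, and which can be augmented to also return the indices $i, j$ achieving the optimum, and hence the actual optimal interval $[p_i, p_j]$. This yields the $O(n)$ bound in the case that the input is already sorted. When the input is not sorted, the sorting step dominates and costs $O(n \log n)$ via comparison sort, giving the $O(n \log n)$ bound. Both halves of the theorem follow immediately.

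There is essentially no obstacle here: the only thing to be slightly careful about is the correspondence between "convex polytope in $\IR^1$" and "contiguous subarray," which requires noting that (i) a one-dimensional convex set is an interval, (ii) we may assume the interval's endpoints are input points, so it captures exactly a contiguous block of the sorted sequence, and (iii) conversely every contiguous block arises from some interval. Another minor point is degeneracy from coincident coordinates: if several points share a coordinate, any interval containing one of them contains all of them, but this is automatically handled since they appear consecutively in the sorted order and Kadane's algorithm ranges over all contiguous blocks. I would also remark that this immediately recovers the well-known fact that the maximum subarray problem and one-dimensional $MWCP$ are equivalent, so the running times are tight up to the cost of sorting in the comparison model.
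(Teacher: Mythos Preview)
Your proposal is correct and follows exactly the same approach the paper takes: reduce one-dimensional $MWCP$ to the maximum subarray problem by sorting the points and considering the array of their weights, then invoke Kadane's algorithm. The paper states this reduction in a single sentence and declares the theorem immediate, so your write-up is in fact more detailed than the paper's own justification.
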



Bautista et al.~\cite{bautista2011computing} gave a dynamic programming algorithm that solves the $MWCP$ problem in $2$ dimensions in $O(n^3)$ time. Their algorithm is based on a triangulation of a convex polytope from a topmost \emph{anchor} vertex. 

\begin{theorem}[Bautista et al. \cite{bautista2011computing}]
The $MWCP$ problem is solvable in $O(n^3)$ time in $2$ dimensions ($d =2$).
\end{theorem}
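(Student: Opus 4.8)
The plan is to reduce, via Lemma~\ref{lemma: positive enough}, to finding a set $C \subseteq S^+$ maximizing $w(conv(C))$, and then to run a dynamic program over convex chains organized as triangle fans, mirroring the triangle‑fan approach of Bautista et al.~\cite{bautista2011computing}. I would iterate over the choice of a vertex $a \in S^+$ to play the role of the topmost vertex of the polygon; one cannot simply fix $a$ to be the globally topmost point of $S^+$, since adjoining that point to $C$ may enlarge $conv(C)$ and pull in many negatively‑weighted points. With $a$ fixed, sort the points of $S^+$ lying below $a$ by angle around $a$; a convex polygon having $a$ as its unique topmost vertex is then exactly a subsequence $v_1, v_2, \dots, v_k$ of this angular order such that $a, v_1, \dots, v_k$ are in convex position, and such a polygon is precisely the fan of triangles $\triangle a v_i v_{i+1}$, $i = 1, \dots, k-1$.

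For the dynamic program I would let the state be the last edge $(v_{i-1}, v_i)$ of a partial chain and let $f(v_{i-1}, v_i)$ be the maximum weight of the polygon $\langle a, v_1, \dots, v_i\rangle$ over all admissible completions of the chain before $v_{i-1}$; a transition appends a new vertex $v_{i+1}$ (keeping $a, v_1, \dots, v_{i+1}$ in convex position) and updates the value by adding $w(\triangle a v_i v_{i+1}) - w([a, v_i])$, where $w([a,v_i])$ is the total weight of input points on the segment $[a, v_i]$. This update is exact because passing from $\langle a, \dots, v_i\rangle$ to $\langle a, \dots, v_i, v_{i+1}\rangle$ replaces the region by its union with $\triangle a v_i v_{i+1}$, and the two pieces meet precisely along $[a, v_i]$, so inclusion--exclusion gives the stated correction; the base case is a single triangle $\triangle a v_1 v_2$ with value $w(\triangle a v_1 v_2)$, and the answer for a given $a$ is the maximum over all states together with the trivial candidates consisting of a single point or a single, possibly collinear, segment through $a$.

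The hard part is the running time. The scheme as stated has $O(n)$ anchors, $O(n^2)$ states per anchor, and $O(n)$ transitions per state, for a total of $O(n^4)$; two accelerations are needed to reach $O(n^3)$. First, I would make every triangle weight $w(\triangle pqr)$ available in $O(1)$ time after $O(n^2 \log n)$ preprocessing: writing a triangle as the complement of the union of its three outer half‑planes and expanding by inclusion--exclusion expresses $w(\triangle pqr)$ through half‑plane weights $w(\{x : x \text{ strictly right of } \overrightarrow{pq}\})$ and exterior‑wedge weights at single vertices, all obtainable by angular sweeps around each point (and the segment weights $w([a,p])$ by grouping the points on lines through $a$). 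Second, I would amortize the transitions to $O(1)$: for a fixed second‑to‑last vertex $v_i$, as the next vertex $v_{i+1}$ rotates in order of the direction $\overrightarrow{v_i v_{i+1}}$, the set of admissible predecessors $v_{i-1}$ (those for which $v_{i-1}, v_i, v_{i+1}$ turns the correct way) is a rotating circular arc in the cyclic order of the directions $\overrightarrow{v_{i-1} v_i}$, so after a global presort of directions a two‑pointer sweep with a monotone deque maintains the required running maximum of the relevant $f$‑values in amortized constant time. Together these give $O(n^2)$ work per anchor and $O(n^3)$ overall. The remaining work is to pin down conventions for degeneracies — collinear triples whose degenerate triangle may still contain points, points lying on polygon edges, ties in the angular order — so that each input point is charged exactly once; this is routine and does not affect the asymptotics. (The ``completely new'' planar algorithm announced in the introduction may organize the search differently while achieving the same $O(n^3)$ bound.)
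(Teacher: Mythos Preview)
Your proposal is correct and essentially reconstructs the anchor-based triangle-fan dynamic program of Bautista et al.\ that the paper cites for this theorem; the paper's own contribution here is a \emph{different} $O(n^3)$ algorithm for the same statement. You fix a topmost anchor $a$, decompose the polygon as a fan of triangles $\triangle a v_i v_{i+1}$, run a DP whose state is the last edge of the chain, and loop over all $O(n)$ anchors. The paper instead decomposes a polygon into an upper concave path and a lower convex path between its leftmost and rightmost vertices, assigns to each directed edge $p_i\to p_j$ the total weight of input points lying strictly below the segment and between the two $x$-coordinates, and fills a three-index table $C[i,j,k]$ (respectively $V[i,j,k]$) for the best concave (convex) path from $p_i$ to $p_k$ whose first edge is $p_i\to p_j$; the answer is $\max_{i\le k}\bigl(\max_j C[i,j,k]-\min_j V[i,j,k]\bigr)$, with no outer anchor loop. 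Both routes rely on the same amortization idea you describe---sorting angularly around the ``middle'' vertex so that the convex-turn constraint becomes a monotone window---and the paper implements exactly this via its precomputed lists $L_j,R_j$ and the running pointer $fc(i,j)$. The trade-offs: you need the inclusion--exclusion precomputation to query triangle weights in $O(1)$ (which you sketch correctly via half-plane and wedge sums from angular sweeps), whereas the paper's edge weights are trivially brute-forced in $O(n^3)$; on the other hand, the paper carries a third index throughout while your per-anchor table is only two-dimensional. Either argument establishes the $O(n^3)$ bound.
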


In the rest of this section we present a new algorithm which solves $MWCP$ problem in $2$ dimensions, albeit with the same $O(n^3)$ running time. Our algorithm is based on a different decomposition (see Figure~\ref{fig:two-decompositions}), and is arguably simpler than the algorithm of Bautista et al.

\begin{figure}[H]

\centering
\includegraphics[scale=0.7]{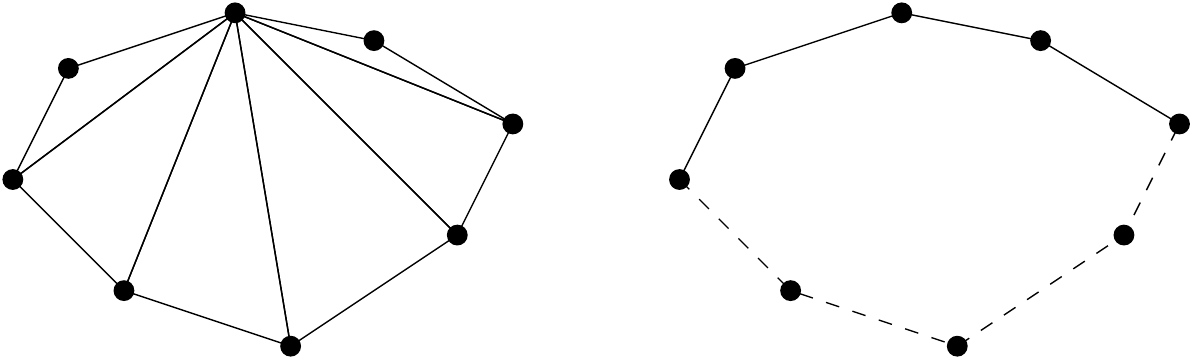}
\caption{Two decompositions of a polytope which form a basis of two dynamic programming approaches. In the approach of Bautista et al. (shown on the left) a polytope is decomposed via a triangulation from an anchor (topmost) vertex. In our approach (shown on the right) a polytope is decomposed into two paths from a leftmost to a rightmost vertex: top concave path (shown solid) and bottom convex path (shown dashed).}\label{fig:two-decompositions}
\end{figure}


Without loss of generality we can assume that no two points of $S$ have the same $x$-coordinates. Otherwise in $O(n^2)$ we can find line $\ell$ such that is not parallel to any line passing through two point in $S$. Then we can rotate the axes so that the $y$-axis becomes parallel to $\ell$.

Let $p_1,\ldots,p_n$ be the points in $S$ sorted from left to right by their $x$-coordinates. Consider a directed edge from $p_i$ to $p_j$ for every $i<j$. Weight of the edge $p_i\to p_j$, denoted by $w(p_i,p_j)$, is the sum of all the weights of points $p_k$ such that $i < k < j$ and $p_k$ is below the line segment joining $p_i$  and $p_j$. We can use brute-force algorithm to compute $w(p_i,p_j)$ for all $i<j$ in $O(n^3)$ time. Thus, we assume that all these weights have been precomputed and are available to us when we need them. A \textit{path} is a sequence of connected edges. For a path $\mathcal{P}$ we define its weight, denoted by $w(\mathcal{P})$, to be the sum of the weights of its edges and its vertices. For a path $\mathcal{P}$ we define its sub-weight, denoted by $w^{-}(\mathcal{P})$, to be the sum of the weights of its edges only.

A polygon $P$ can be represented as a concave path $\mathcal{C}$ and a convex path $\mathcal{V}$ between its leftmost and its rightmost vertices (see Figure~\ref{fig:two-decompositions}). Thus the weight of $P$ is equal $w(\mathcal{C})-w^{-}(\mathcal{V})$. We shall present a dynamic programming algorithm to solve the optimization version of the problem, where we are interested in computing the weight of a maximum weight polygon only. The algorithm can be easily modified to find a maximum weight polygon itself by the standard technique of remembering which choices resulted in individual entries of the dynamic programming tables. 

For every $i<j\leq k$, let $C[i,j,k]$ (respectively $V[i,j,k]$) be the maximum (respectively, minimum) weight (respectively, sub-weight) of a concave (respectively, convex) path from $p_i$ to $p_k$ such that the first edge is $p_i \to p_j$. We denote the maximum weight of a polygon with leftmost vertex $p_i$ and rightmost vertex $p_k$ by $M[i,k]$. If $i = k$ then $M[i,k] = w(p_k)$, and if $i < k$ then $M[i,k]$ can be computed as:
\[M[i,k] = \max_{j:i<j\leq k} C[i,j,k] - \min_{j:i<j\leq k} V[i,j,k].\]
The solution to the overall problem is then given by the $\max_{i \le k} M[i,k]$.

In the remainder, we explain how the table $C[i,j,k]$ can be computed. The table $V[i,j,k]$ is computed analogously with some trivial modifications (such as excluding contribution of vertices of the path, replacing concavity with convexity, and replacing maximization objective with minimization objective).

In the algorithm, we have to check whether a line segment joining vertices $p$ and $q$ can be extended to a vertex $r$ with $p.x < q.x < r.x$ while maintaining concavity. This can be tested by checking whether the vector $r-p$ is turned clockwise relative to the vector $q-p$ (see Figure~\ref{fig:concavity-test}). In turn, this can be achieved by checking the sign of $2$-dimensional cross-product, denoted by $\times_2$, and defined as $v_1 \times_2 v_2 = v_1.x \cdot v_2.y - v_1.y \cdot v_2.x$. To summarize we have that the path $p \to q \to r$ is concave if and only if\footnote{A bit of care is needed to handle inputs that are not in general position. If three points $p, q, r$ with $p.x < q.x < r.x$ are collinear then $(r-p) \times_2 (q-p) = 0$, and the path $p, q, r$ should be considered concave. However, this makes $q$ not a vertex of the resulting polytope, as it appears in the middle of an edge. In our description, we tacitly assumed that points are in general position to simplify the presentation. It is easy to extend our algorithm to handle points not in general position.}$(r-p) \times_2 (q-p) > 0$.

\begin{figure}[H]

\centering
\includegraphics[scale=0.7]{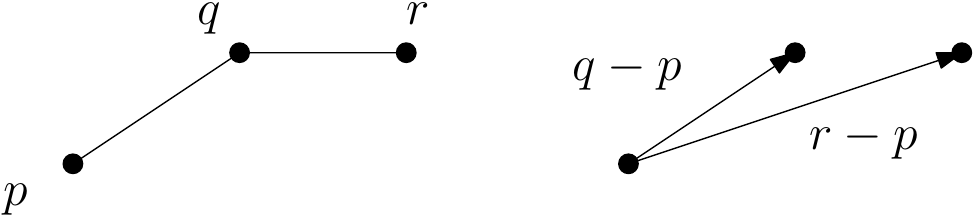}
\caption{The path $p \to q \to r$ is concave if and only if vector $r-p$ is turned clockwise relative to vector $q-p$.}\label{fig:concavity-test}
\end{figure}

Base cases for the table $C[i,j,k]$ are the following:
\begin{align*}
    C[i, k, k] &= w(p_i, p_k) + w(p_i) + w(p_k) \qquad \text{if } i < k \\
    C[i, j, k] &= -\infty \qquad \text { if } i < j < k   \\
    &\qquad\qquad\text{ and } (p_k - p_i) \times_2 (p_j - p_i) < 0
\end{align*}

It is clear that the other entries $C[i,j,k]$ with $i < j < k$ can be computed according to the following formula:
\begin{align}
    C[i,j,k] &= \max_{j'} \{ w(p_i, p_j) + w(p_i) + C[j, j', k] : \label{eq:main_computation}\\
    &  j < j' \le k \text { and } (p_{j'}-p_i) \times_2 (p_j-p_i) > 0\}.\nonumber
\end{align} 

A naive computation of the above table takes $O(n^4)$ time, since the table has $O(n^3)$ entries and each entry can be computed in $O(n)$ time. Next, we show a trick of how the time complexity can be reduced to $O(n^3)$. The idea is for a fixed $j$ and $k$ to fill in entries $C[i,j,k]$ for all $i$ in $O(n)$ time.

We precompute in $O(n^2 \log n)$ total time for all $j$ two lists: $L_j= (l_1,\ldots,l_{j-1})$ and $R_j=(r_1,\ldots,r_{n-j})$. $L_j$ ($R_j$) consists of points $\{p_1,\ldots, p_{j-1}\}$ (respectively, $\{p_{j+1},\ldots, p_n\}$) to the left (respectively, to the right) of $p_j$ and sorted in clockwise order with respect to $p_j$ as the origin. 

Now, fix a pair of indices $j < k$. In $O(n)$ time it is easy to compute $D[j',k] = \max_{j''} \{ C[j, j'', k] : j'' \le k \text{ and } p_{j''} \text{ is either $p_{j'}$ or appears after } p_{j'} \text{ in } R_j\}$. Define the first compatible $j'$ for the given $i,j$, denoted by $fc(i,j)$, as the first $p_{j'}$ appearing in $R_j$ such that $p_i \to p_j \to p_{j'}$ is concave. Then it is clear that $C[i,j,k]$ can be equivalently restated as follows:
\[ C[i,j,k] = w(p_i, p_j) + w(p_i) + D[fc(i,j), k].\]
This is because, every $p_{j''}$ that appears after $fc(i,j)$ in $R_j$ also forms a concave path $p_i \to p_j \to p_{j''}$. Thus, the third term $D[fc(i,j), k]$ in the above equation is exactly the same as the third term in Equation~\eqref{eq:main_computation}.

Lastly, it is left to observe that as one considers points $p_i$ in the order in which they appear in $L_j$, the corresponding sequence of $fc(i,j)$ also forms an increasing sequence in $R_j$. Thus, by maintaining a running pointer into $R_j$ one can compute $fc(i,j)$ in $O(n)$ time for all $p_i \in L_j$. This finishes the description of the algorithm. One readily checks that all precomputing steps take $O(n^3)$, base cases of $C[i,j,k]$ can also be computed in $O(n^3)$ time, and all other entries can be computed in $O(n^3)$ as well, by iterating over all pairs $j < k$ and filling in $C[i, j,k]$ for all $i$ in $O(n)$ time.

\subsection{Lower bounds for \texorpdfstring{$3$}{Lg} and \texorpdfstring{$4$}{Lg} dimensions}
\label{sec:lbs}

Recall that a \emph{strict reduction} from an optimization problem $\mathcal{A}$ to an optimization problem $\mathcal{B}$ is a pair of functions $(f, g)$, where $f$ maps instances $x$ of $\mathcal{A}$ to instances $f(x)$ of $\mathcal{B}$ and $g$ maps solutions $y$ of $\mathcal{B}$ to solutions $g(y)$ of $\mathcal{A}$, such that the approximation ratio achieved by solution $y$ on instance $f(x)$ of $\mathcal{B}$ is at least as good as the approximation ratio achieved by solution $g(y)$ on instance $x$ of $\mathcal{A}$. 
All our lower bound results in this section are based on the following technical lemma.


\begin{lemma}
\label{lemma: general reduction}
Let $\mathcal{G}$ be a graph family. If for every $G \in \mathcal{G}$ a polytope embedding of $G$ into $\mathbb{R}^d$ can be found in polynomial time and bit complexity polynomial in $n$, then there is a strict reduction from the maximum independent set on $\mathcal{G}$ to $MWCP$ in $d$ dimensions with weights $\{+1, -1\}$.
\end{lemma}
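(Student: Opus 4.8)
The plan is to construct a strict reduction $(f,g)$ from maximum independent set on $\mathcal{G}$ to $MWCP$ in $d$ dimensions. Given an instance $G(V,E)$ of maximum independent set, let $P$ be a polytope embedding of $G$ into $\mathbb{R}^d$ obtained by the hypothesis in polynomial time with polynomially bounded bit complexity, and let $f: V \to vert(P)$ be the associated injection. The idea is to place a point of weight $+1$ at each vertex $f(v)$ of $P$ (for $v \in V$), and, on every edge of $P$ that corresponds to an edge of $G$ (more generally, on every edge of $P$, but the relevant ones are those coming from $E$), place a point of weight $-1$ strategically near the midpoint of that edge, so that this negative point is swallowed by $conv(C)$ exactly when both endpoints of the edge are included in $C$. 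So $f(x)$ is this weighted point set $S$, of size $|V| + |\{\text{edges of } P\}| = \poly(n)$, and the coordinates have polynomial bit complexity. The recovery map $g$ takes a polytope $Q$ (equivalently, by Lemma~\ref{lemma: positive enough}, a subset $C \subseteq S^+$) and returns the set of vertices $\{v \in V : f(v) \in C\}$, after first deleting from $C$ one endpoint of each edge of $G$ whose corresponding negative midpoint lies inside $conv(C)$ — this makes the returned set independent.

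The key steps, in order. First I would fix the placement of the negative points: for each edge $e = (f(u), f(v))$ of $P$, choose a point $q_e$ on the open segment between $f(u)$ and $f(v)$ (so $q_e$ is in the interior of that edge of $P$) and perturb it slightly toward the interior of $P$ if necessary so that $q_e$ lies strictly inside $conv(\{f(u), f(v), w\})$ for no single third vertex $w$, but lies on the segment $f(u)f(v)$ — the crucial property to establish is: for $C \subseteq S^+$ corresponding to a vertex subset $W \subseteq V$, the point $q_e$ belongs to $conv(C)$ if and only if $u, v \in W$. The ``if'' direction is Fact~\ref{fact: convex combination}: $q_e$ is a convex combination of $f(u)$ and $f(v)$. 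The ``only if'' direction is where I would use that $(f(u), f(v))$ is an edge (a $1$-face) of $P$: there is a supporting hyperplane $h$ of $P$ with $h \cap P = $ that edge, so any point on the relative interior of the edge can only be written as a convex combination of points of $P$ lying on $h$, and the only vertices of $P$ on $h$ are $f(u), f(v)$ (Fact~\ref{fact: face is polytope}); hence if $q_e \in conv(C)$ then $C$ must contain both $f(u)$ and $f(v)$. Second, with this property in hand, for $C$ corresponding to $W$ I get $w(C) = |W| - |E(G[W])|$, the number of vertices minus the number of induced edges. Third, I would verify the exact correspondence of optima: if $W$ is independent then $w(C) = |W|$, so $\OPT_{MWCP}(f(x)) \ge \alpha(G)$; conversely, from any $C$ with $w(C) = t$ corresponding to $W$, greedily removing one endpoint per induced edge yields an independent set of size $\ge |W| - |E(G[W])| = w(C) = t$, so $\alpha(G) \ge \OPT_{MWCP}(f(x))$, giving equality of the optima and, more importantly, the per-instance inequality $|g(Q)| \ge w(Q)$ needed for strictness. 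Finally I would assemble these into the definition of a strict reduction: for any solution $Q$ to $f(x)$, the approximation ratio $w(Q)/\OPT_{MWCP}(f(x)) = w(Q)/\alpha(G) \le |g(Q)|/\alpha(G)$, which is exactly the approximation ratio of $g(Q)$ on $x$.

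The main obstacle I expect is the ``only if'' direction of the placement property — controlling the negative points so that a negative point $q_e$ is never accidentally enclosed by a convex hull of positive points that does \emph{not} include both endpoints of $e$. This is exactly why the definition insists that $(f(u),f(v))$ be an \emph{edge} of the polytope $P$ rather than merely a chord: the face structure (Fact~\ref{fact: face is polytope}) guarantees a supporting hyperplane isolating the two endpoints, which is what forces both of them into $C$. A secondary technical point is handling the ``extra'' edges that a polytope embedding (as opposed to a realization) may possess: these carry negative points too, but since no edge of $G$ is missing from $P$, every induced edge of $G[W]$ still contributes a $-1$, and extra edges of $P$ only contribute further $-1$'s when both their endpoints are in $C$ — which can only decrease $w(C)$ relative to $|W| - |E(G[W])|$, so the lower bound $\alpha(G) \ge \OPT_{MWCP}(f(x))$ still needs the greedy-removal argument to be stated carefully, removing an endpoint for every enclosed negative point (whether from an edge of $G$ or an extra edge of $P$). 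I would also note in passing that the bit-complexity hypothesis is what keeps $f$ polynomial-time computable and keeps the instance size polynomial, which is needed for the reduction to be a genuine polynomial-time strict reduction.
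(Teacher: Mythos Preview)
Your approach is essentially the paper's: place $+1$'s at $vert(P)$, place negatives on the polytope edges coming from $E(G)$, and use the supporting-hyperplane characterization of a $1$-face (Fact~\ref{fact: face is polytope}) to show that a negative point on edge $(f(u),f(v))$ lies in $conv(C)$ iff both $f(u),f(v)\in C$. The only cosmetic difference is that the paper puts \emph{two} $-1$ points per graph edge and argues via ``maximal'' subsets (removing one endpoint then strictly gains $+2-1>0$), whereas you put one $-1$ per edge and recover an independent set by greedy endpoint deletion; both devices yield $|g(Q)|\ge w(Q)$ and $\OPT_{MWCP}=\alpha(G)$.

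One point to clean up: drop the parenthetical about placing negatives on \emph{all} edges of $P$ and the last paragraph's handling of extra edges. Your formula $w(C)=|W|-|E(G[W])|$ and the equality $\OPT_{MWCP}=\alpha(G)$ require negatives only on the $G$-edges (this is what the paper does). If you also put $-1$'s on the extra polytope edges, then an independent set $W$ of $G$ can still induce extra $P$-edges, so $w(f(W))<|W|$ is possible and you lose the direction $\OPT_{MWCP}\ge\alpha(G)$; the greedy-removal argument you give only salvages $\alpha(G)\ge\OPT_{MWCP}$, which is not enough for strictness. Also drop the remark about perturbing $q_e$ toward the interior of $P$ --- $q_e$ must lie exactly on the open segment for the hyperplane argument to go through, as you yourself then correctly assert.
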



\begin{proof}
Given input instance $G = (V,E)$ to the maximum independent set on $\mathcal{G}$, we let $P$ be the result of applying the polytope embedding to $G$. 
Let $S^+:=vert(P)$ and assign $+1$ weight to every vertex in $S^+$. Create set $S^-$ by adding two points with weights of $-1$ at two arbitrary positions of every graph edge. Let $S=S^+\cup S^-$. For a negative point $v\in S^-$, let $p_1(v),p_2(v)\in S^+$ be positive-weighted vertices such that $v$ was placed on the edge joining $p_1(v)$ with $p_2(v)$ and $n(v)$ be the other negative point on that edge. See Figure~\ref{fig:embedding} for an example.

We claim that for a subset $C\subseteq S^+$, there exist a negative point $v\in S^-$ in $conv(C)$ if and only if $p_1(v), p_2(v) \in C$. One direction is clear: if $p_1(v), p_2(v) \in C$ then by Fact~\ref{fact: convex combination} $n(v)$ and $v$ are in $conv(C)$.  For the other direction, assume that $v \in conv(C)$. Let $e$ be the edge between $p_1(v)$ and $p_2(v)$. By the definition of $P$, there exist a hyperplane $h_{e}$ such that $S^+\cap h^-_e = \emptyset$.  Therefore $C\cap h^-_e = \emptyset$ and $F:=conv(C)\cap h_e$ is a face of $conv(C)$. $F\neq \emptyset$ since $v$ is in $h_e$ and $conv(C)$. Only vertices of $S^+$ in $h_e$ are $\{p_1(v),p_2(v)\}$. By Fact \ref{fact: face is polytope} $vert(F)= F\cap vert(conv(C)) \subseteq h_e \cap S^+ = \{p_1(v),p_2(v)\}$. 
Without loss of generality suppose $vert(F)=\{p_1(v)\}$, this implies $v\notin F$ which is a contradiction. Thus $vert(F)=\{p_1(v),p_2(v)\}$ and $p_1(v),p_2(v) \in C$. 

Let $C\subseteq S^+$ be a maximal subset. We claim 
that $conv(C)$ contains no negative points and all positive points in $conv(C)$ are precisely the vertices of $conv(C)$.
First, suppose there exists a negative point $v\in conv(C)$ thus $p_1(v),p_2(v)\in C$ and $n(v)\in conv(C)$. $w(C\backslash \{p_1(v)\}) \geq w(C)+2-1 > w(C)$ since $v,n(v),p_1(v)\notin conv(C\backslash \{v_i\})$. This is a contradiction to maximality of $C$.
Second, suppose there exist a positive point $v\in S$ in $conv(C) \setminus vert(conv(C))$. Because $v$ is a vertex of $P$ there exist a hyperplane $h_v$ such that $S^+\cap h^-_v = \emptyset$. Therefore $C\cap h^-_e = \emptyset$ and $v$ is a vertex of $conv(C)$ which is a contradiction.

Therefore, we can conclude that $w(C) = |C|$ if $C\subseteq S^+$ is maximal. Next, we prove there exists a maximal subset $C \subseteq S^+$ if and only if there exist an independent set $\mathcal{I} \subseteq V$ such that $w(C)=|\mathcal{I}|$.

\textbf{If:} Let $\mathcal{I} \subseteq V$ be an independent set and $C\subseteq S^+$ be the set of corresponding vertices of $\mathcal{I}$ in $S^+$. Because there is no edge between vertices in $\mathcal{I}$, there is no graph edge between vertices in $C$. Thus there are no negative points in $conv(C)$. Since all vertices inside $conv(C)$ are positive, $C$ is a maximal subset and $w(C)=|C|=|\mathcal{I}|$.

\textbf{Only if:} Let $C\subseteq S^+$ be a maximal subset and let $\mathcal{I} \subseteq V$ be the set of corresponding vertices of $C$ in $G$. Because $C$ is a maximal subset, there is no negative point in $conv(C)$, and there is no graph edge between vertices of $C$. Thus the set of corresponding vertices of $C$ in $G$ is an independent set. $|\mathcal{I}|=w(C)$ since $w(C)=|C|$.

Without loss of generality we can suppose every approximation algorithm for $MWCP$ outputs a maximal subset of $S^+$. Thus there exist a strict reduction from the maximum independent set problem of graph $G(V,E)$ to $MWCP$ in $\IR^d$.
\end{proof}

\begin{figure}[H]

\centering
\includegraphics[scale=0.06]{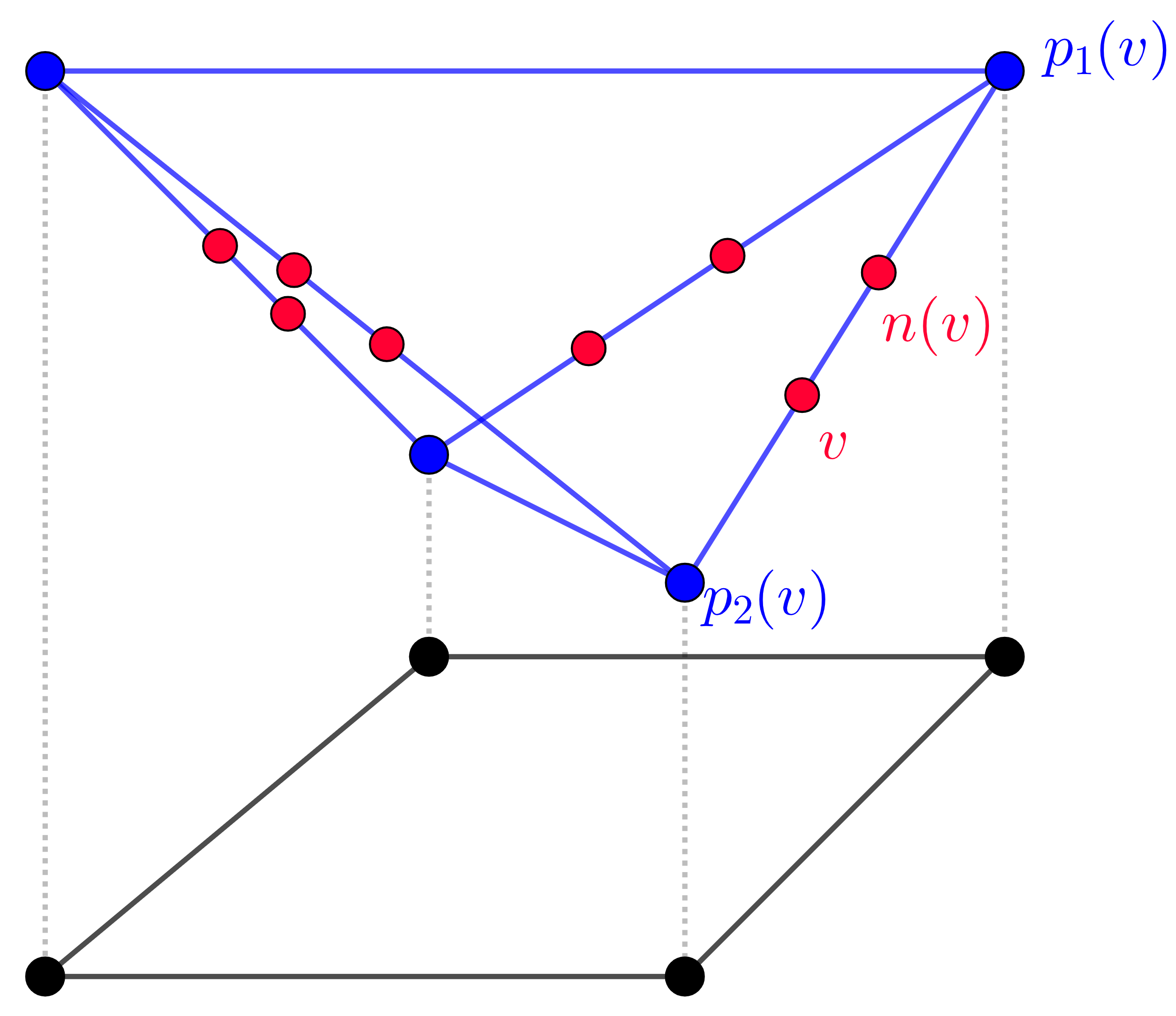}
\caption{A graph and its embedding in $\IR^3$. Black points and edges are the graph and blue points and edges are the embedding of the graph. Red points are added negative points. And an example of $v$, $n(v)$, $p_1(v)$, and $p_2(v)$ is shown.}\label{fig:embedding}
\end{figure}


We obtain the lower bound for $3$ dimensions by applying Lemma~\ref{lemma: general reduction} to the class $\mathcal{G}$ of planar graphs. We note that the maximum independent set problem is $\mathcal{NP}$-hard even for planar graphs \cite{garey1979computers}. 
Our lower bound relies on the polynomial embedding in $3$ dimensions due to Das et al.~\cite{das1997complexity}. A \textit{maximal planar graph} is a planar graph such that an addition of any new edge results in a non-planar graph. 

\begin{lemma}[Das et al. \cite{das1997complexity}]
\label{lemma: planar to polytope}
Given a maximal planar graph $G(V,E)$ with $n$ vertices, a polytope realization of $G$ in $\IR^3$ can be found in $O(n)$ time and with bit complexity polynomial in $n$.
\end{lemma}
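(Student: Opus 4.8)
The plan is to invoke the classical fact that every $3$-polytope is combinatorially equivalent to a planar $3$-connected graph (Steinitz's theorem), but instead of re-proving that, I would rely on the explicit construction of Das et al.\ and just argue that it can be carried out in linear time with polynomially bounded coordinates. So the first step is to recall the structure of a maximal planar graph $G(V,E)$: it is a triangulation of the sphere, hence $3$-connected, and $|E| = 3n - 6$. By Steinitz's theorem there exists a $3$-polytope whose vertex-edge graph is isomorphic to $G$; the content of the lemma is making this effective. I would cite the incremental/shelling-based construction of Das et al.\ \cite{das1997complexity}, which processes the vertices of $G$ in a canonical ordering (e.g.\ a canonical ordering or a shelling order of the triangulation) and maintains a polytope whose graph is the induced subgraph, placing each new vertex ``above'' an appropriate face.

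The key steps, in order, would be: (i) compute a suitable vertex ordering of $G$ in $O(n)$ time (canonical orderings of maximal planar graphs are computable in linear time); (ii) run the incremental construction, at each step adding one vertex and showing that with an appropriate choice of its position the combinatorial structure of the new polytope is exactly the graph induced on the vertices so far — here one must verify that the only new edges created are those present in $G$, which uses maximality/triangulation of $G$ so that each newly revealed face is a triangle; (iii) bound the bit complexity: show that the coordinates chosen at each step can be taken to be integers (or rationals) whose bit length grows by only a polynomial factor, so that after $n$ steps coordinates still have $\poly(n)$ bits; (iv) conclude that the final polytope $P$ satisfies $f(V) \to vert(P)$ bijectively with $(u,v) \in E \iff (f(u),f(v))$ an edge of $P$, and since $P$ has exactly $|E| = 3n-6$ edges this is a polytope \emph{realization}, not merely an embedding.

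Since the combinatorial correctness of the Das et al.\ construction is established in their paper, in the write-up I would mostly be \emph{quoting} Lemma~\ref{lemma: planar to polytope} as a black box and pointing to \cite{das1997complexity}; the only thing worth spelling out is why the output is a realization (exact edge count) rather than a general embedding, and the explicit $\poly(n)$ bit bound, since our application of Lemma~\ref{lemma: general reduction} needs polynomial bit complexity. The main obstacle — and the reason one defers to \cite{das1997complexity} rather than reproving it — is controlling the coordinate size: a naive ``lift each new vertex to a paraboloid'' or ``place it far above a face'' construction can cause doubly-exponential blowup in the coordinates across $n$ incremental steps, and the real work in \cite{das1997complexity} is a careful choice of positions (and a careful analysis of the supporting hyperplanes of the faces) that keeps the bit complexity polynomial while preserving the combinatorial type at every stage.
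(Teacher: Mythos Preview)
Your proposal is correct and matches the paper's treatment: the paper does not prove Lemma~\ref{lemma: planar to polytope} at all but simply states it as a cited result from Das et al.~\cite{das1997complexity}, and you likewise end up quoting it as a black box. The additional exposition you give (canonical ordering, incremental construction, bit-complexity control) is accurate background on what \cite{das1997complexity} does, but none of it appears in the present paper.
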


Thus the following theorem can be easily deduced from Lemmas~\ref{lemma: planar to polytope} and~\ref{lemma: general reduction}. 


\begin{theorem}
\label{theorem: 3d hardness}
Let $S$ be a set of $n$ points in $\IR^3$ with weight function $w$, finding $MWCP$ of $S$ is $\mathcal{NP}$-hard even if $w: S\to \{-1,+1\}$.
\end{theorem}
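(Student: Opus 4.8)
The plan is to combine the two lemmas stated just before the theorem. By Lemma~\ref{lemma: general reduction}, it suffices to exhibit a graph family $\mathcal{G}$ on which maximum independent set is $\mathcal{NP}$-hard and such that every $G \in \mathcal{G}$ admits a polytope embedding into $\IR^3$ computable in polynomial time with bit complexity polynomial in $n$. Lemma~\ref{lemma: planar to polytope} hands us exactly such an embedding (in fact a realization) for \emph{maximal} planar graphs, so the natural candidate is $\mathcal{G} = \{$maximal planar graphs$\}$; the one gap to close is that $\mathcal{NP}$-hardness of maximum independent set on planar graphs \cite{garey1979computers} must be upgraded to $\mathcal{NP}$-hardness on \emph{maximal} planar graphs.

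First I would recall the reduction from planar graphs to maximal planar graphs for independent set. Given a planar graph $G = (V,E)$ with $n$ vertices, fix a planar embedding; it has at most $O(n)$ faces. Triangulate each face by adding edges, and whenever a face cannot be triangulated with existing vertices (e.g.\ multiple components, or to avoid creating multi-edges), subdivide by inserting new ``dummy'' vertices inside faces and connecting them up until the graph is maximal planar. The key accounting point is to do this while controlling how the independence number changes: every dummy vertex added in the interior of a face can be made adjacent to all vertices on that face, so that any independent set uses at most a bounded, predictable number of dummies, and conversely one can always push an independent set off the dummies without decreasing its size. A clean way to do this is the standard trick of replacing each dummy vertex by a small gadget (or simply ensuring each added vertex has a private neighbor structure) so that $\alpha(G') = \alpha(G) + c$ for a polynomially-computable constant $c$ equal to the number of added vertices that are ``forced'' into every maximum independent set — or, even simpler, arrange that no maximum independent set of $G'$ uses any dummy vertex, giving $\alpha(G') = \alpha(G)$ and an $\mathcal{NP}$-hardness-preserving reduction. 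I would then cite or state this as a folklore fact: maximum independent set is $\mathcal{NP}$-hard on maximal planar graphs.

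Having established that, the theorem follows formally: compose the reduction (planar MIS $\to$ maximal planar MIS) with the strict reduction of Lemma~\ref{lemma: general reduction} instantiated via the $\IR^3$ realization of Lemma~\ref{lemma: planar to polytope}. Since Lemma~\ref{lemma: general reduction} produces an instance with weights in $\{-1,+1\}$, the hardness holds even under the binary weight restriction, which is the strengthened form in the statement. The polynomial running time and polynomial bit complexity are inherited: triangulation/maximalization is polynomial-time and introduces only polynomially-bounded coordinates, and Das et al.'s realization is $O(n)$ time with $\poly(n)$ bit complexity, so the whole pipeline maps an instance of a known $\mathcal{NP}$-hard problem to an $MWCP$ instance in polynomial time with polynomially many bits.

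The main obstacle is the bookkeeping in the maximal-planar reduction for independent set: one must add enough edges and vertices to make the graph a maximal planar graph (so that Lemma~\ref{lemma: planar to polytope} applies — it genuinely requires maximality, i.e.\ a triangulation) while simultaneously guaranteeing the independence number is preserved exactly, or shifted by a quantity we can compute. Triangulating by only adding \emph{edges} would suffice for maximality but can only shrink the independence number unpredictably; adding dummy \emph{vertices} lets us repair this but then we must argue those vertices are never helpful to an optimal independent set. Getting this gadgetization right — and verifying the reduction is approximation-preserving in the strict sense, not merely hardness-preserving — is the delicate part; everything else is assembling machinery already in place.
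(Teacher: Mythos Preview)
Your approach works in principle but overlooks the key simplification the paper exploits. Recall the paper's definition of a \emph{polytope embedding}: it requires every edge of $G$ to map to an edge of $P$, but $P$ is allowed to have extra edges. Consequently, if you take an arbitrary planar graph $G$, add edges (no new vertices, no gadgets) until it is maximal planar, and then apply Lemma~\ref{lemma: planar to polytope} to realize the triangulation as a polytope $P\subset\IR^3$, that same $P$ is already a polytope embedding of the \emph{original} $G$. Hence Lemma~\ref{lemma: general reduction} can be invoked with $\mathcal{G}=\{\text{all planar graphs}\}$, and the $\mathcal{NP}$-hardness of MIS on planar graphs transfers directly. The negative points in the reduction of Lemma~\ref{lemma: general reduction} are placed only on \emph{graph} edges of $G$, not on the extra edges introduced by triangulation, so the independence number of $G$ (not of its triangulation) is what governs the $MWCP$ optimum.

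This sidesteps entirely the ``delicate part'' you identified: there is no need to prove MIS is $\mathcal{NP}$-hard on maximal planar graphs, no dummy vertices, no gadget bookkeeping to preserve $\alpha(G)$. Your route --- reduce planar MIS to maximal-planar MIS, then apply Lemma~\ref{lemma: general reduction} with $\mathcal{G}=\{\text{maximal planar graphs}\}$ --- could be made to work, but the first reduction is left as a hand-wave (``folklore''), and the gadget sketch you give is not obviously correct as stated (adding a vertex adjacent to all vertices of a face does not by itself force it out of every maximum independent set). The paper's two-line argument avoids all of this by reading the definition of ``embedding'' carefully.
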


\begin{proof}
Let $\mathcal{G}$ be the family of all planar graphs. By adding edges to a planar graph $G$ we can make it maximal. The polytope realization of the new maximal planar graph is also a polytope embedding of $G$. Thus with Lemma~\ref{lemma: planar to polytope} we can conclude for every $G \in \mathcal{G}$ a polytope embedding of $G$ in $\IR^3$ can be found in polynomial time and with polynomial bit complexity. By Lemma~\ref{lemma: general reduction}, there is a strict reduction from maximum independent set on planar graphs to $MWCP$ with weights $\{+1, -1\}$, hence it is an $\mathcal{NP}$-hard problem.
\end{proof}

Let $S$ be the set of points $(i,i^2,i^3,i^4)$ for $1\leq i \leq n$ in $\IR^4$. The convex hull of $S$ is known as the cyclic polytope on $n$ vertices in $\IR^4$ and it is a polytope realization of a complete graph with $n$ vertices (for more details, see, for example, \cite{ziegler2012lectures}).

\begin{lemma}
\label{lemma: complete to polytope}
Given a complete graph $K_n$ with $n$ vertices, a polytope realization of it in $\IR^4$ can be found in $O(n)$ time with a bit complexity polynomial in $n$.
\end{lemma}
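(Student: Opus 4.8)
The plan is to exhibit the point set $S = \{(i,i^2,i^3,i^4) : 1 \le i \le n\}$ explicitly and argue that $\mathrm{conv}(S)$ is a polytope realization of $K_n$, i.e. every pair of the $n$ points forms an edge of the polytope and there are no other edges. First I would observe that all $n$ points are vertices of $\mathrm{conv}(S)$: this follows because the moment curve $\gamma(t) = (t,t^2,t^3,t^4)$ is such that any $4$ of its points are affinely independent (the relevant determinant is a Vandermonde-type determinant that does not vanish for distinct parameters), so in particular no point of $S$ lies in the convex hull of the others. Hence $|vert(\mathrm{conv}(S))| = n$, matching $|V(K_n)| = n$, and the map sending vertex $i$ of $K_n$ to $(i,i^2,i^3,i^4)$ is the required bijection.

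Next I would verify the edge condition. Fix two indices $i < j$. I want a hyperplane $h$ in $\IR^4$ that contains both $\gamma(i)$ and $\gamma(j)$ and has all other points $\gamma(k)$, $k \notin \{i,j\}$, strictly on one side; by the standard supporting-hyperplane characterization (Fact~\ref{fact: face is polytope} applied to the $1$-face), this shows $\mathrm{conv}\{\gamma(i),\gamma(j)\}$ is a face, hence an edge. The classical trick is to take the polynomial $p(t) = (t-i)^2 (t-j)^2$, which is a degree-$4$ polynomial, nonnegative everywhere, vanishing exactly at $t = i$ and $t = j$. Writing $p(t) = a_0 + a_1 t + a_2 t^2 + a_3 t^3 + a_4 t^4$, the affine functional $\ell(x_1,x_2,x_3,x_4) = a_4^{-1}(p(0)\cdot\text{(constant)} \ldots)$ — more cleanly, the functional $x \mapsto a_1 x_1 + a_2 x_2 + a_3 x_3 + a_4 x_4$ evaluated at $\gamma(t)$ equals $p(t) - a_0$, so the hyperplane $\{x : a_1 x_1 + a_2 x_2 + a_3 x_3 + a_4 x_4 = -a_0\}$ passes through exactly $\gamma(i)$ and $\gamma(j)$ and has every other $\gamma(k)$ strictly on the side where $p(t) > 0$. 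This simultaneously shows every pair $\{i,j\}$ is an edge and that the only vertices on the supporting hyperplane of this edge are $\gamma(i),\gamma(j)$, so the edge is genuinely $\mathrm{conv}\{\gamma(i),\gamma(j)\}$ with no third vertex. Since $K_n$ has all $\binom{n}{2}$ edges, the embedding has exactly $|E(K_n)|$ edges and is a realization.

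Finally I would handle the complexity bookkeeping: the construction outputs $n$ points, each of which is a $4$-tuple of integers $i, i^2, i^3, i^4$ with $1 \le i \le n$, so each coordinate has bit length $O(\log n)$ and the whole instance has bit complexity $O(n \log n)$, certainly polynomial in $n$; writing the points down takes $O(n)$ arithmetic operations. The main obstacle, such as it is, is not conceptual but one of citation-versus-self-containment: the statement "$\mathrm{conv}$ of the moment curve points is a polytope realization of $K_n$" is a textbook fact about cyclic polytopes (Gale's evenness condition / the neighborliness of cyclic polytopes in even dimension), so the cleanest route is to cite Ziegler~\cite{ziegler2012lectures} for the claim that the $4$-dimensional cyclic polytope is $2$-neighborly — meaning every pair of vertices spans an edge — and that it has exactly $n$ vertices, and then just add the short remark on bit complexity. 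If a self-contained argument is preferred, the $p(t) = (t-i)^2(t-j)^2$ computation above is the whole proof and needs no external input beyond Facts~\ref{fact: convex combination} and~\ref{fact: face is polytope}.
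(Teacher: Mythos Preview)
Your proposal is correct and matches the paper's approach exactly: the paper introduces the lemma by pointing to the cyclic polytope $\mathrm{conv}\{(i,i^2,i^3,i^4):1\le i\le n\}$ and cites Ziegler~\cite{ziegler2012lectures} for the fact that it realizes $K_n$, which is precisely your ``cleanest route.'' Your self-contained argument via the supporting functional coming from $p(t)=(t-i)^2(t-j)^2$ is the standard proof behind that citation and is a welcome addition; the only minor slip is the phrase ``any $4$ of its points are affinely independent'' (you want $5$ in $\IR^4$, or better, just reuse the polynomial trick with $q(t)=(t-i)^2$ to exhibit each $\gamma(i)$ as a vertex directly).
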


We can use Lemma \ref{lemma: complete to polytope} to show that $MWCP$ in $4$ dimensions is as hard as independent set on arbitrary graphs. Zuckerman~\cite{zuckerman2007}, strengthening an earlier result of H\r{a}stad~\cite{hastad96}, showed that it is $\mathcal{NP}$-hard to approximate independent set on arbitrary graphs within $n^{1-\epsilon}$ factor for any $\epsilon > 0$.

\begin{theorem}[Zuckerman~\cite{zuckerman2007}]
\label{thm:indep-set-hardness}
For any $\epsilon > 0$ it is $\mathcal{NP}$-hard to approximate maximum independent set to within $n^{1-\epsilon}$.
\end{theorem}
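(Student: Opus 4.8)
The plan is to deduce this from the PCP theorem through the well-known equivalence between $\mathcal{NP}$-hardness of approximating the maximum acceptance probability of a probabilistically checkable proof and hardness of approximating the maximum clique, and then to transfer the bound to independent set. For the last step I would simply pass to the complement: independent sets of a graph $H$ are exactly the cliques of $\overline{H}$, and $H$ and $\overline{H}$ have the same number of vertices, so any $N^{1-\epsilon}$-approximation for one gives an $N^{1-\epsilon}$-approximation for the other, and it suffices to prove clique inapproximability. The engine is the FGLSS reduction (Feige--Goldwasser--Lov{\'a}sz--Safra--Szegedy), which turns a PCP verifier for $\mathcal{NP}$ that tosses $r$ random coins, reads $f$ free bits, and has soundness $s$ (with completeness $1$) into a graph on $N \le 2^{r+f}$ vertices whose maximum clique equals $2^r$ times the maximum acceptance probability. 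Hence on yes-instances the clique has size $2^r$, on no-instances at most $s\,2^r$, and the clique gap is $1/s$. To reach ratio $N^{1-\epsilon}$ one therefore needs $\log_2(1/s) \ge (1-\epsilon)(r+f)$, i.e. the free bits must be negligible, $f = o(\log_2(1/s))$, and the randomness must satisfy $r = (1+o(1))\log_2(1/s)$.

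H{\aa}stad's long-code-based PCP~\cite{hastad96} supplies a verifier whose \emph{amortized} free-bit complexity $f/\log_2(1/s)$ tends to $0$, which is exactly the first requirement. The difficulty is entirely in the randomness budget. A single such verifier has only constant soundness $s$, giving a constant clique gap on a graph of polynomial size; to obtain a polynomial gap one must drive $s$ down to $1/\poly(N)$ by repetition. Independent sequential or parallel repetition multiplies both $\log_2(1/s)$ and $r$ by the number of repetitions, so $r$ does not stay inside the tight $(1+o(1))\log_2(1/s)$ window, and one ends up with a weaker exponent; this is what limits a purely deterministic argument to roughly $N^{1/2-\epsilon}$.

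The main obstacle, and the heart of the theorem, is to amplify the gap while spending almost no extra randomness; this is where I would invoke Zuckerman's idea of a \emph{randomness-efficient} amplification driven by an explicit \emph{disperser} (equivalently, a near-optimal randomness extractor). Instead of choosing the $k$ repeated sub-tests with fresh independent coins, one uses the $k$ outputs of a disperser applied to a single short seed; a disperser with near-optimal parameters certifies that for all but an $s$-fraction of seeds at least one of the $k$ selected sub-tests rejects a cheating proof, so soundness still decays like $s_0^{\Omega(k)}$ while the total randomness is only $\log_2(1/s) + o(\log_2(1/s))$. Feeding this amplified verifier into FGLSS yields $N = 2^{r+f} = (1/s)^{1+o(1)}$ with clique gap $1/s = N^{1-o(1)}$, giving $\mathcal{NP}$-hardness of approximating clique, and hence independent set, within $N^{1-\epsilon}$ for every fixed $\epsilon>0$. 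The two technically delicate points I would have to discharge are (i) constructing dispersers with seed length extremal enough to stay inside the $o(\log_2(1/s))$ budget, and (ii) checking that disperser-based repetition preserves completeness $1$ and the claimed soundness decay; carrying these out with an explicit rather than random disperser is precisely what upgrades H{\aa}stad's $\mathcal{ZPP}$-conditional $N^{1-\epsilon}$ bound to a bound under $\mathcal{P}\neq\mathcal{NP}$.
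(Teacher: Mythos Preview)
The paper does not prove this theorem at all: it is stated as a cited result of Zuckerman~\cite{zuckerman2007} and used as a black box in the proof of Theorem~\ref{theorem: 4d hardness}. There is nothing to compare on the paper's side.

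Your sketch is a faithful high-level outline of Zuckerman's actual argument --- FGLSS reduction, H{\aa}stad's PCP with vanishing amortized free-bit complexity, and the key derandomization step of replacing independent repetition by disperser-driven repetition so that the randomness $r$ stays within $(1+o(1))\log_2(1/s)$. You correctly identify that this last step is what upgrades H{\aa}stad's result (which was conditional on $\mathcal{NP}\not\subseteq\mathcal{ZPP}$) to a result under $\mathcal{P}\neq\mathcal{NP}$, and that the technical core is the explicit construction of dispersers with near-optimal seed length. For the purposes of this paper, however, none of this is needed: the theorem is imported wholesale from the literature, and reproducing its proof is well beyond the paper's scope.
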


 Combining the above ingredients we establish the inapproximability of $MWCP$ in $4$ dimensions and higher.
 
\begin{theorem}
\label{theorem: 4d hardness}
For any $\epsilon > 0$ it is $\mathcal{NP}$-hard to approximate $MWCP$ in $4$ dimensions (or higher) with weights $\{+1, -1\}$ to within $n^{1/2-\epsilon}$.
\end{theorem}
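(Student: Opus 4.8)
The plan is to combine Lemma~\ref{lemma: complete to polytope} with Lemma~\ref{lemma: general reduction} to transfer the inapproximability of maximum independent set (Theorem~\ref{thm:indep-set-hardness}) to $MWCP$ in $\IR^4$, while carefully tracking how the instance size blows up. First I would observe that the cyclic polytope on $n$ vertices in $\IR^4$ is a polytope realization of $K_n$, and it has bit complexity polynomial in $n$ since the coordinates $(i,i^2,i^3,i^4)$ are polynomially bounded; hence for \emph{any} graph $G$ on $n$ vertices, deleting the appropriate edges of this realization gives a polytope embedding of $G$ in $\IR^4$ computable in polynomial time with polynomial bit complexity. So Lemma~\ref{lemma: general reduction} applies with $\mathcal{G}$ the family of all graphs, yielding a strict reduction from maximum independent set on arbitrary graphs to $MWCP$ in $\IR^4$ with weights $\{+1,-1\}$.

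Next I would pin down the size of the produced $MWCP$ instance. Starting from a graph $G$ with $N$ vertices and $M \le \binom{N}{2}$ edges, the reduction in Lemma~\ref{lemma: general reduction} creates $|S^+| = N$ positive points and $|S^-| = 2M$ negative points, so the total number of points is $n = N + 2M = O(N^2)$. The strict reduction guarantees that an $\alpha$-approximate solution to $MWCP$ on this instance yields an $\alpha$-approximate independent set in $G$. Now suppose, for contradiction, that $MWCP$ in $\IR^4$ can be approximated within $n^{1/2 - \epsilon}$ in polynomial time for some $\epsilon > 0$. Since $n = O(N^2)$, we have $n^{1/2-\epsilon} = O(N^{1-2\epsilon})$, which is at most $N^{1-\epsilon'}$ for a suitable $\epsilon' > 0$ and all sufficiently large $N$. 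This would give a polynomial-time $N^{1-\epsilon'}$-approximation for maximum independent set on arbitrary graphs, contradicting Theorem~\ref{thm:indep-set-hardness}. For dimensions $d > 4$, I would note that embedding the cyclic polytope into the hyperplane $x_5 = \cdots = x_d = 0$ of $\IR^d$ preserves the realization property (placing all points in a $4$-flat keeps every edge a genuine edge of the convex hull), so the same argument goes through verbatim.

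The one place requiring a bit of care — and the step I expect to be the main (if modest) obstacle — is the exponent bookkeeping: the quadratic blowup $n = \Theta(N^2)$ is exactly what turns the $n^{1-\epsilon}$ hardness of independent set into an $n^{1/2-\epsilon}$ hardness of $MWCP$, and one must check that the blowup is $\Theta(N^2)$ rather than something larger. If $G$ is sparse ($M = O(N)$) the instance is only linear in $N$, but worst-case graphs for Theorem~\ref{thm:indep-set-hardness} may be dense; in the dense case $M = \Theta(N^2)$ and $n = \Theta(N^2)$, which is the binding regime and is precisely what we want. (In fact one could uniformly pad any instance to $\Theta(N^2)$ points without affecting the argument, but it is cleaner simply to use the bound $n \le N + N^2$.) Everything else — polynomial running time of the embedding, polynomial bit complexity, and the fact that a strict reduction composes with the approximation ratio in the right direction — is supplied by Lemmas~\ref{lemma: complete to polytope} and~\ref{lemma: general reduction}. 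I would conclude by spelling out the contrapositive: a polynomial-time $n^{1/2-\epsilon}$-approximation for $MWCP$ in $\IR^d$, $d \ge 4$, yields $\mathcal{P} = \mathcal{NP}$.
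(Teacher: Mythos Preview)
Your proposal is correct and follows essentially the same route as the paper: use the cyclic polytope (Lemma~\ref{lemma: complete to polytope}) so that Lemma~\ref{lemma: general reduction} applies to all graphs, then note that the resulting $MWCP$ instance has at most $N + 2\binom{N}{2} = O(N^2)$ points, turning Zuckerman's $N^{1-\epsilon}$ hardness into $n^{1/2-\epsilon}$ hardness. One minor wording issue: you do not ``delete edges'' from the cyclic polytope to obtain a polytope embedding of $G$ --- a polytope's edges are determined by its vertices --- rather, the cyclic polytope on $N$ vertices already \emph{is} a polytope embedding of every $N$-vertex graph precisely because the definition allows extra edges.
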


\begin{proof}
Let $\mathcal{G}$ be the family of all finite graphs. By Lemma~\ref{lemma: complete to polytope} for every $G \in \mathcal{G}$ a polytope embedding of $G$ in polynomial time and with polynomial bit complexity can be found (recall that the embedding is allowed to have extra edges compared to $G$). By Lemma~\ref{lemma: general reduction}, there is a strict reduction from maximum independent set on general graphs to $MWCP$ with weights $\{+1, -1\}$. Since Theorem~\ref{thm:indep-set-hardness} is expressed in terms of input size, it is left to observe that the reduction of Lemma~\ref{lemma: general reduction} produces instances of $MWCP$ with the number of points that is at most quadratic in the number of vertices of the input graph.
\end{proof}

\section{Conclusion and Discussion}   

In this work, we extended our understanding of the complexity of $MWCP$ as a function of the ambient dimension $d$. Based on our work and previous work of Bautista et al.~\cite{bautista2011computing}, the following picture emerges: 

\begin{enumerate}
    \item For $d = 1$, $MWCP$ is solvable in $O(n \log n)$ time exactly (simple observation);
    \item For $d = 2$, $MWCP$ is solvable in $O(n^3)$ time (Bautista et al.~\cite{bautista2011computing} with another algorithm presented in this work);
    \item  For $d = 3$, $MWCP$ is not solvable in polynomial time unless $\mathcal{P} = \mathcal{NP}$ (this work);
    \item For $d \ge 4$, $MWCP$ is $\mathcal{NP}$-hard to approximate to within $n^{1/2-\epsilon}$ for any $\epsilon > 0$ (this work).
\end{enumerate} 
The above list immediately suggests several open problems, the following two of which are of particular interest:
\begin{openprob}
Find an algorithm with better time complexity than $O(n^3)$ for $MWCP$ in $2$ dimensions or prove a lower bound probably with some fine-grained hypothesis.
\end{openprob}
\begin{openprob}
Determine if $MWCP$ can be approximated within a constant factor in $3$ dimensions.
\end{openprob}
We conjecture that the answer to the first open problem is that there is no algorithm significantly faster than $O(n^3)$. 
In light of the second open problem, it is tempting to consider what approximation guarantees are provided by polytopes with constantly many vertices. As the following result demonstrates, constant approximation cannot be guaranteed by such solutions even in $2D$.

\begin{theorem}
\label{thm:no-fixed-k}
By restricting solutions to polytopes with constant number of vertices one can not achieve a constant factor approximation for $MWCP$ even in $\IR^2$ and even for $\{+1, -1\}$ weights.
\end{theorem}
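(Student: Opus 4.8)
The plan is to exhibit, for every constant $k$ and every constant $c$, an instance of $MWCP$ in $\IR^2$ with $\{+1,-1\}$ weights on which the best $k$-vertex polygon has weight at most a $1/c$ fraction of the optimum (which itself will be realized by a polygon with many vertices). The natural construction is to place many positive points on a convex curve — say $m$ points in convex position on a circular arc — and then guard each edge of their convex hull with negative points placed just outside it, exactly mimicking the gadget of Lemma~\ref{lemma: general reduction}. Concretely, for the $m$ positive points $q_1,\dots,q_m$ in convex position, put a cluster of negative points (of total weight, say, $-2$ or more) in the tiny lens-shaped region between the chord $q_iq_{i+1}$ and the arc, for each consecutive pair, plus similarly for the closing chord $q_mq_1$; place these clusters so that each negative cluster is enclosed by a polygon on a subset $C\subseteq\{q_1,\dots,q_m\}$ precisely when \emph{both} endpoints of the corresponding arc-edge lie in $C$. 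Since $q_1,\dots,q_m$ are in convex position, the only way a convex polygon avoids all negative clusters is to use an ``independent set'' of the cycle $q_1q_2\cdots q_mq_1$, i.e.\ to skip every other vertex.

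The key steps, in order: (i) Specify the point set precisely — positive points on an arc, negative guard clusters in the lenses — and verify, using the same half-space/face argument as in the proof of Lemma~\ref{lemma: general reduction}, that for $C\subseteq\{q_1,\dots,q_m\}$ the polygon $conv(C)$ contains negative points iff $C$ contains two cyclically adjacent $q_i$'s, and contains no negative point otherwise. (ii) Conclude that $\OPT$ on this instance equals the max weight of a negative-free polygon, which is $\lfloor m/2 \rfloor$ (choose every other positive point; these form an independent set in the cycle $C_m$, and that is the maximum independent set of $C_m$). Actually it is cleanest to make each negative cluster heavy enough (total weight $-m$, say) that any polygon touching even one cluster has nonpositive weight, so $\OPT = \lfloor m/2\rfloor$ exactly. (iii) Bound what a $k$-vertex polygon can do: by Lemma~\ref{lemma: positive enough} we may assume its vertices are among the positive points $\{q_1,\dots,q_m\}$ (the negative points never help as vertices; a short argument handles this, or one simply observes a $k$-vertex maximizer with vertices in $S^+$ exists and has weight at least that of any $k$-vertex polygon). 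A polygon with at most $k$ positive vertices that is negative-free is a set of at most $k$ pairwise non-adjacent vertices of $C_m$, so its weight is at most $k$; a polygon that does touch a cluster has weight at most $k - m < 0$ for $m > k$. Hence the best $k$-vertex polygon has weight at most $k$, while $\OPT = \lfloor m/2\rfloor$. (iv) Choose $m > 2ck$ to make the ratio $\OPT / k \ge \lfloor m/2\rfloor / k > c$, and note $n = \Theta(m)$, so the instance is polynomial-size; since $k$ and $c$ were arbitrary constants, no constant-factor approximation is possible by $k$-vertex polygons.

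The main obstacle is step (i): making the ``guard'' argument go through with negative \emph{clusters} (needed so $\OPT$ is forced to be negative-free) rather than the two-point guards of Lemma~\ref{lemma: general reduction}, while ensuring each cluster's lens is so thin that it lies strictly inside $conv(\{q_i,q_{i+1},\dots\})$ whenever both $q_i,q_{i+1}$ are chosen but strictly outside whenever one of them is omitted — this is a matter of choosing the lens small relative to the curvature of the arc, and of invoking the supporting-hyperplane/face reasoning exactly as in Lemma~\ref{lemma: general reduction}. A secondary, minor point is justifying that restricting $k$-vertex solutions to positive vertices loses nothing in this instance; one can either prove a $k$-vertex analogue of Lemma~\ref{lemma: positive enough} (shrinking any $k$-gon to the convex hull of its enclosed positive points only decreases vertex count and does not decrease weight) or argue directly that a negative vertex contributes $-1$ and can be deleted. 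Everything else is routine arithmetic with the parameters $m$, $k$, $c$.
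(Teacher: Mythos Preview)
Your construction is genuinely different from the paper's: the paper puts a single $-1$ point just \emph{outside} each edge of a regular $n$-gon (so the full $n$-gon is optimal with weight $n$), whereas you put guards \emph{on} the edges so that the optimum becomes an alternating subset of weight $\lfloor m/2\rfloor$. Both routes can be made to work, but your write-up has a real gap at precisely the step the paper itself singles out as the crux.

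The gap is in step (iii). You want to conclude that every convex $k$-gon $Q$ has weight at most $k$, and you justify restricting to $vert(Q)\subseteq S^+$ via a ``$k$-vertex analogue of Lemma~\ref{lemma: positive enough}'': shrink $Q$ to $conv(S^+\cap Q)$. But this can \emph{increase} the vertex count --- a single triangle can enclose all $m$ positive points, and its shrink is the full $m$-gon with $m$ vertices. So the reduction to vertices in $S^+$ is illegitimate, and the paper says exactly this: ``what makes this claim non-trivial is that we cannot assume that $vert(C)\subseteq S^+$ as in Lemma~\ref{lemma: positive enough}, since we have an additional restriction of exactly $k$ vertices.'' Your alternative fix (``a negative vertex contributes $-1$ and can be deleted'') only handles vertices lying in $S^-$, not vertices at arbitrary points of $\IR^2$, which is the actual difficulty. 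What would save your construction is an argument you did not give: since $Q$ has $k$ edges, $\partial Q$ meets the boundary of the $m$-gon in at most $2k$ points, so the positives inside $Q$ fall into at most $k$ cyclic runs; a run of length $\ell$ forces $\ell-1$ guarded chords (hence $2(\ell-1)$ negatives) into $Q$, and summing gives $w(Q)\le 2t-r\le t\le k$ where $r$ is the number of enclosed positives and $t\le k$ the number of runs. The paper's proof handles the same obstacle by a different (and slightly more delicate) decomposition of $\overline{C\setminus P}$.

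A smaller issue: placing the guards ``in the lens-shaped region between the chord $q_iq_{i+1}$ and the arc'' puts them \emph{outside} $conv(q_1,\dots,q_m)$, so no $conv(C)$ with $C\subseteq\{q_1,\dots,q_m\}$ ever contains them, contradicting the very property you assert. You should place them on the open segment $q_iq_{i+1}$, exactly as in Lemma~\ref{lemma: general reduction}; then the supporting-line argument really does give the ``enclosed iff both endpoints chosen'' property.
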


\begin{proof}
Let $P$ be a regular $n$-gon and let the weight of each vertex be $+1$. Put a vertex with weight $-1$ outside of $P$ on the perpendicular bisector of each edge of $P$ at $\epsilon$-distance away from the edge. Choose $\epsilon$ so that line segments joining every two consecutive negative points cross $P$. This defines the instance of $MWCP$ with $P$ being an optimal solution of weight $n$. 

 Let $v_1,v_2,\dots,v_n$ and $u_1,u_2,\dots u_n$ be vertices of the clockwise order of $S^+$ and $S^-$, respectively, such that $u_i$ has $\epsilon$-distance with the edge between $v_i$ and $v_{i+1}$ ($v_{n+1}:=v_1$).

 Let $C$ be a convex $k$-gon, we claim $w(C)\leq k$. Observe that what makes this claim non-trivial is that we cannot assume that $vert(C) \subseteq S^+$ as in Lemma~\ref{lemma: positive enough}, since we have an additional restriction of exactly $k$ vertices.
 
 $\overline{ C\backslash P}$ (the closure of $C\backslash P$) is a set of vertices, edges and non-convex polygons. Let $C'$ be one of these non-convex polygons. 
 It suffices to show $w(C')\leq |vert(C)\cap vert(C')|$. Without loss of generality suppose $vert(C')\cap S^+ = \{v_1,v_2, ..., v_r\}$. 

 Let \textit{outer negative points} be the set $\{u_{i_1}, u_{i_2}, \dots, u_{i_\ell}\} \subseteq \{u_1, u_2, \dots u_{r-1}\}$ such that for every $1\leq j \leq \ell$, $u_{i_j} \notin C'$. For each $1\leq j \leq \ell$ associate $u_{i_j}$ to the edge $e$ of $C'$ that crosses the shortest line between $u_{i_j}$ and $P$. By the choice of $\epsilon$ two vertices of $e$ are in $vert(C') \cap vert(C)$ and no edge is associated to more than one outer negative point. Thus $|vert(C') \cap vert(C)| \geq \ell +1$. On the other hand there is at most $r$ positive and at least $r-1-l$ negative points in $C'$ thus $w(C') \leq l+1 \leq |vert(C') \cap vert(C)|$.
\end{proof}

\begin{figure}[H]

\centering
\includegraphics[scale=0.7]{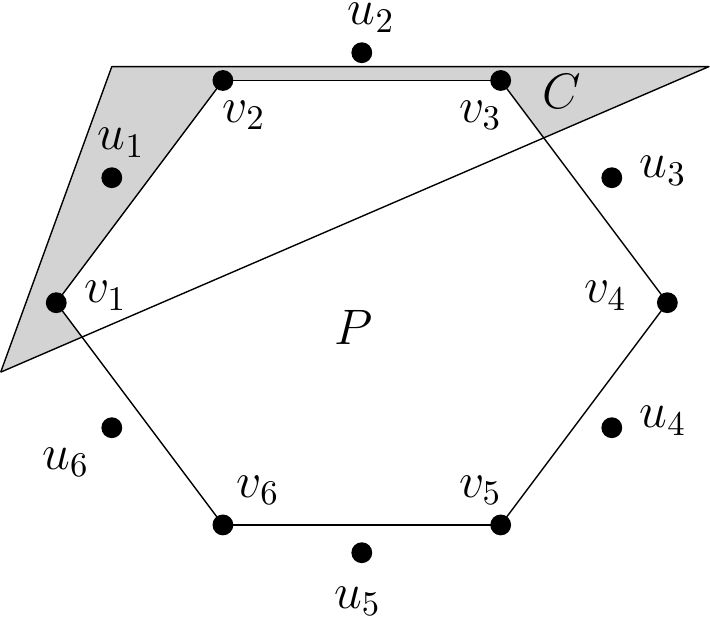}
\caption{Illustration of the proof of Theorem~\ref{thm:no-fixed-k}. Here, $n = 6, k = 3$, we chose $C$ to result only in a single $C'$, which is shown as a shaded area. We have $l = 1$ with $u_{i_1} = u_2$ and vertex $u_2$ is associated with the topmost edge of $C'$. We have $w(C') = w(v_1) + w(v_2) + w(v_3) + w(u_1) = 3 - 1 = 2 = l+1$.}\label{fig:no-fixed-k}
\end{figure}



\small
\bibliographystyle{abbrv}



\bibliography{bibs/full,bibs/refs}
 
\newpage


\end{document}